\newcounter{MYtempeqncnt}
\newtheorem{theorem}{Theorem}
\newtheorem{proposition}[theorem]{Proposition}
\newtheorem{remark}{Remark}
\DeclarePairedDelimiterX\MeijerM[3]{\lparen}{\rparen}%
{\,#3\delimsize\vert\begin{smallmatrix}#1 \\ #2\end{smallmatrix}}
\newcommand\FoxH[8][]{%
  H^{\,#2,#3}_{#4,#5}\FoxH[#1]{#6}{#7}{#8}}
\begin{document}
\title{Effective Rate of RIS-aided Networks with Location and Phase Estimation Uncertainty}
\author{\IEEEauthorblockN{Long Kong{$^\ddagger$},~Steven Kisseleff{$^\dagger$},~Symeon Chatzinotas{$^\dagger$},~Björn Ottersten{$^\dagger$}, and Melike Erol-Kantarci{$^\ddagger$}}\\
\IEEEauthorblockA{$^\ddagger$ School of Electrical Engineering and Computer Science, University of Ottawa, Ontario, Canada\\
$^{\dagger}$The Interdisciplinary Centre for Security Reliability and Trust (SnT), University of Luxembourg, Luxembourg  \\
Emails: lkong2@uottawa.ca, \{steven.kisseleff, symeon.chatzinotas, bjorn.ottersten\}@uni.lu, melike.erolkantarci@uottawa.ca}}
\maketitle

\begin{abstract}
Reconfigurable Intelligent Surfaces (RIS) are planar structures connected to electronic circuitry, which can be employed to steer the electromagnetic signals in a controlled manner. Through this, the signal quality and the effective data rate can be substantially improved. While the benefits of RIS-assisted wireless communications have been investigated for various scenarios, some aspects of the network design, such as coverage, optimal placement of RIS, etc., often require complex optimization and numerical simulations, since the achievable effective rate is difficult to predict. This problem becomes even more difficult in the presence of phase estimation errors or location uncertainty, which can lead to substantial performance degradation if neglected. Considering randomly distributed receivers within a ring-shaped RIS-assisted wireless network, this paper mainly investigates the effective rate by taking into account the above-mentioned impairments. Furthermore, exact closed-form expressions for the effective rate are derived in terms of Meijer's $G$-function, which (i) reveals that the location and phase estimation uncertainty should be well considered in the deployment of RIS in wireless networks; and (ii) facilitates future network design and performance prediction. 
\end{abstract}
\begin{IEEEkeywords}
Reconfigurable intelligent surfaces (RISs), effective rate, phase estimation error, location uncertainty.
\end{IEEEkeywords}

\IEEEpeerreviewmaketitle

\section{Introduction}
\IEEEPARstart{R}{econfigurable} intelligent surfaces (RISs) are considered key enablers for  future wireless communications \cite{liaskos2018new}. Accordingly, RISs have been proposed for various scenarios and applications including 6G \cite{8981888}, internet of things (IoT) \cite{mursia2020risma}, smart cities \cite{9253607}, challenging environments \cite{9600850}, etc. The benefit of RIS lies in its capability of shaping the wireless propagation environments by adjusting the signal reflections \cite{di2019smart}. Through this, the signal quality and connectivity can be substantially improved. Furthermore, the energy consumption at RIS is extremely low, which is a favorable property compared to traditional relaying.

Various aspects of RIS-assisted networks have been investigated so far, such as optimal placement of RIS \cite{9386246}, signal routing using multiple RIS \cite{9241752}, hybrid automatic repeat request for RIS-aided communication systems \cite{aihybrid}, etc. Due to the associated system complexity, the methods proposed in these works typically employ highly complex and computationally intensive optimization techniques. For the same reason, it is difficult to predict the system performance for RIS-assisted networks without accurate and thorough numerical evaluation. On the other hand, if an important performance metric, such as effective rate, can be directly determined without prior simulations of the environment merely based on the known network configuration, the network design and planning can be substantially simplified. This motivates the analytical derivation of such metrics, which is the focus of this work.

Effective rate connecting both physical layer and link layer, is a popular performance metric used in the RIS-aided wireless networks. \cite{9312154} describes the RIS-assisted links using the cascaded fading channel model and also provides a derivation of the effective rate with exact and approximated expressions. The authors in \cite{AMAN2021101339} study the effective rate under two extreme assumptions with regard to the availability of the channel state information. However, the impact of phase estimation errors and location uncertainty on the effective rate has not been addressed yet. This paper aims at bridging this gap by providing analytical expressions for the effective rate with consideration of the aforementioned impairments.

The contributions of this work are three-fold:
\begin{itemize}
\item analytical closed-form expressions for the effective rate are derived in terms of Meijer's $G$-function;
\item the obtained solution incorporates important impairments, which are likely to occur in practical scenarios of RIS-assisted networks, i.e. location uncertainty and imperfect phase estimation;
\item extensive simulations validate the accuracy of the derived closed-form expressions independently from the network configuration, impairments, and number of RIS elements.
\end{itemize}

The remainder of this paper is organized as follows. In Section II, the system and signal propagation models are described. In Section III, the closed-form expressions for the effective rate under impairments are derived. Section IV provides insights upon the numerical evaluation of the derived expressions and the predicted system performance under various conditions. Subsequently, Section V concludes the paper.

\textit{Mathematical Functions and Notations}: $j = \sqrt{-1}$, $\mathbf{K}_n(\cdot)$ is the modified Bessel function of the second kind \cite[Eq. (5.52)]{gradshteyn2014table}. $\Gamma(L)$ is the gamma function \cite[Eq. (8.310.1)]{gradshteyn2014table}, $G_{p,q}^{m,n}[\cdot]$ is the univariate Meijer's $G$-function\footnote{It is worth mentioning that the univariate Meijer's $G$-function is a built-in function and available at MATLAB, i.e., $\text{meijerG}(a,b,c,d,x)$, Mathematica, and Maple. It can be flexibly transformed into many functions, e.g. the exponential, logarithm, incomplete gamma functions, etc. More specifically, it has been proved useful and generic when analyzing outage probability, bit error rate, channel capacity, secrecy outage probability, etc. Interested readers can find examples in \cite{7089281,8405529}.} \cite[Eq. (9.301)]{gradshteyn2014table}. $H_{p,q}^{m,n}[\cdot]$ is the Fox's $H$-function \cite[Eq. (8.3.1.1)]{prudnikov1990integrals}. $\mathbb{E}[x]$ is the expectation operator over a random variable (RV) $x$.
\section{System model and preliminaries}
\begin{figure} 
\centering{\includegraphics[width=\columnwidth]{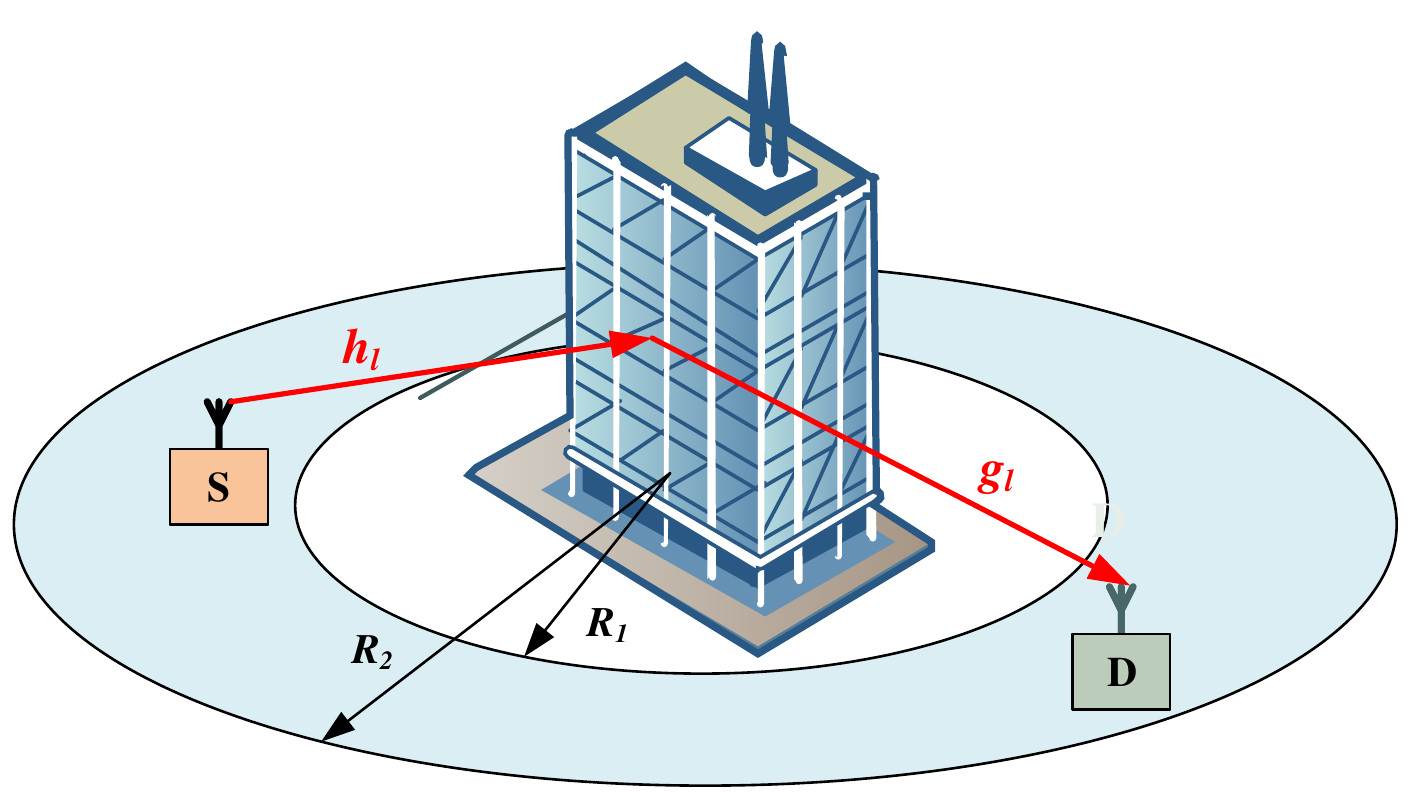}}
\caption{The system model.}
\label{Fig_systemModel}
\end{figure}
We consider a single-input single-output (SISO) communication system assisted by a single RIS, embedded on the surface of a building, shown in Fig. \ref{Fig_systemModel}. Specifically, the source node (\textbf{S}) transmits signal $s$, which is reflected by RIS toward the destination node (\textbf{D}).
The RIS comprises $L$ passive and low-cost reflecting elements. We assume that the direct links are much weaker than the reflection-based ones due to obstacles and fading effects \cite{9138463}. Thus, the direct links are omitted in our investigation. In addition, the exact location of \textbf{D} and its distance to RIS might be unknown. As such, \textbf{D} is assumed to be randomly located within a ring centered at the RIS's position according to homogeneous Poisson point processes (HPPP), where the ring outer radius $R_2$ indicates its coverage region, and the inner radius $R_1$ depicts the minimum distance.   

The instantaneous received signal at \textbf{D} is given by
\begin{equation} \label{Signal_D}
y_{D} = \sqrt{P} \left \vert \sum \limits_{l=1}^L  h_l u_l g_l \right \vert s + \mathfrak{n},
\end{equation}
where $P$ is the transmit power, $h_l=\mathfrak{h}_l \exp(-j\phi_l) $ and $g_l =\mathfrak{g}_l \exp(-j\psi_l) $ are independent and identically distributed (i.i.d.) Rayleigh RVs, which correspond to the channel coefficients from \textbf{S} to the $l$-th reflector element and the $l$-th reflector element to \textbf{D}, respectively. $\mathfrak{h}_l$, $\mathfrak{g}_l$, $\phi_l$, and $\psi_l$ denote the amplitudes and phases of the corresponding fading channel gains. $u_l = w_l(\theta_l)\exp(j\theta_l)$ is the reflection coefficient produced by the $l$-th element of the RIS, herein $w_l(\theta_l)=1$ for the ideal phase shifts, $l=1,\cdots,L$. $s$ is the transmitted signal with unit energy. $\mathfrak{n}$ is the additive white Gaussian noise (AWGN) with zero mean and $\sigma^2$ variance.

Based on (\ref{Signal_D}), the signal-to-noise ratio (SNR) observed at $\textbf{D}$ is obtained as
\begin{align}
\gamma_D& = \frac{P}{d_{SR}^\delta d_{RD}^\delta \sigma^2 } \left \vert \sum \limits_{l=1}^L  \mathfrak{h}_l \mathfrak{g}_l \exp\big( j(\theta_l - \phi_l - \psi_l)\big) \right \vert ^2 \nonumber \\
&=\frac{\rho}{ d_{SR}^\delta d_{RD}^\delta} \left \vert \sum \limits_{l=1}^L \mathfrak{h}_l \mathfrak{g}_l \exp(j\epsilon_l) \right\vert^2,
\end{align}
where $\rho = \frac{P}{\sigma^2}$ and $\delta$ is the path loss exponent. $d_{SR}$ and $d_{RD}$ denote the distances from \textbf{S} to RIS and from RIS to \textbf{D}, respectively. The phase estimation error $\epsilon_l$ is assumed to be an i.i.d. RV following the uniform distribution. Here, \textbf{S} may employ a cheap off-the-shelf hardware, which has a very poor phase estimation accuracy, i.e., $\epsilon_l \sim U(-\pi,\pi)$. For simplicity of notations, we fix $d_{SR}$ in our system setup, and let $\bar{\gamma} = \frac{\rho}{d_{SR}^\delta }$. Next, we would like to determine the probability density of $\textbf{D}$, which is needed for the calculation of the effective rate.
\begin{proposition}
The probability density function (PDF) of $\gamma_D$ is given by (\ref{PDF_MISO}), shown at the top of next page.
\begin{figure*}[!t]
\setcounter{MYtempeqncnt}{\value{equation}}
\setcounter{equation}{2}
\begin{align} \label{PDF_MISO}
f_D(\gamma) = \frac{1}{2\delta \bar{\gamma}\Gamma(L)(R_2^2-R_1^2)}\left[R_2^{2+\delta}G_{1,3}^{2,1} \left[ {\frac{R_2^\delta\gamma}{4\bar{\gamma}}   \left| {\begin{array}{*{20}c}
		{-\frac{2}{\delta}}   \\
		{L-1,0,-1-\frac{2}{\delta} }  \\
		\end{array}} \right.} \right] - R_1^{2+\delta}G_{1,3}^{2,1} \left[ {\frac{R_1^\delta\gamma}{4\bar{\gamma}}   \left| {\begin{array}{*{20}c}
		{-\frac{2}{\delta}}   \\
		{L-1,0,-1-\frac{2}{\delta} }  \\
		\end{array}} \right.} \right]\right],
\end{align}
\hrulefill
\end{figure*}
\end{proposition}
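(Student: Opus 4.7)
The plan is to compute $f_D(\gamma)$ in three stages: (i) derive the PDF of the fading magnitude $|X|^2 \triangleq \bigl|\sum_{l=1}^{L}\mathfrak{h}_l\mathfrak{g}_l e^{j\epsilon_l}\bigr|^2$, (ii) write down the PDF of the random link distance $d_{RD}$, and (iii) marginalize the conditional density over this distance.

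Stage (i) rests on a single observation: each $\mathfrak{g}_l e^{j\epsilon_l}$ has the same distribution as a zero-mean circularly-symmetric complex Gaussian RV. Indeed, $\mathfrak{g}_l$ is the magnitude of a complex Gaussian and $\epsilon_l\sim U(-\pi,\pi)$ is independent of it, so the product has Rayleigh magnitude and uniform phase, which is precisely the polar form of a complex Gaussian. Writing $X = \sum_l \mathfrak{h}_l\,\tilde g_l$ with $\tilde g_l$ i.i.d.\ complex Gaussian and conditioning on $\{\mathfrak{h}_l\}$ shows that $X\mid\{\mathfrak{h}_l\}$ is complex Gaussian with variance proportional to $S\triangleq \sum_l\mathfrak{h}_l^2$. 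Hence $|X|^2\stackrel{d}{=} c\,V\,S$ for a known constant $c$, where $V\sim\mathrm{Exp}(1)$ is independent of $S$; and since $\mathfrak{h}_l^2$ are i.i.d.\ exponentials, $S\sim\mathrm{Gamma}(L,\cdot)$ \emph{exactly}. Convolving the two densities with the aid of the integral $\int_0^\infty u^{\nu-1}e^{-u-a/u}\,du = 2a^{\nu/2}K_\nu(2\sqrt{a})$ yields a modified-Bessel-function form, which is then compactly recast as a constant times the Meijer $G_{0,2}^{2,0}$-function with parameter list $(L-1,\,0)$ and argument $|X|^2/4$.

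For stage (ii), a homogeneous Poisson point process of receivers inside the annulus $R_1\le r\le R_2$ centered at the RIS yields the conditional distance density $f_r(r)=2r/(R_2^2-R_1^2)$. Combined with the SNR scaling $\gamma_D\mid r=(\bar\gamma/r^\delta)|X|^2$, the conditional density of $\gamma_D$ is a rescaled $G_{0,2}^{2,0}$, and the marginal becomes an integral over $[R_1,R_2]$ of $r^{\delta+1}$ times the Meijer $G_{0,2}^{2,0}$-function evaluated at $\gamma r^\delta/(4\bar\gamma)$. Stage (iii) evaluates this integral: substituting $u=\gamma r^\delta/(4\bar\gamma)$ reduces the integrand to $u^{2/\delta}$ times the same $G_{0,2}^{2,0}$, and applying the tabulated Meijer $G$ integration identity $\int_0^z u^{\alpha-1}G_{p,q}^{m,n}[u]\,du = z^\alpha G_{p+1,q+1}^{m,n+1}[z]$ (with an extra top parameter $1-\alpha$ and extra bottom parameter $-\alpha$) with $\alpha=1+2/\delta$ produces a $G_{1,3}^{2,1}$ whose upper parameter is $-2/\delta$ and whose new lower parameter is $-1-2/\delta$, matching~(\ref{PDF_MISO}). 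Writing $\int_{R_1}^{R_2}=\int_0^{R_2}-\int_0^{R_1}$ and collecting the $R_i^{2+\delta}$ powers that emerge from the change of variable completes the derivation.

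I expect the main obstacle to be stage (i), specifically the observation that the uniform phase error \emph{restores} the complex-Gaussian structure of $\mathfrak{g}_l e^{j\epsilon_l}$. This allows the \emph{exact} compound representation $|X|^2\stackrel{d}{=} cVS$, avoiding any CLT-type approximations. Once this is established, stages (ii) and (iii) amount to standard bookkeeping with the annulus distance density and tabulated Meijer $G$ integrals.
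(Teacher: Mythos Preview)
Your proposal is correct. The difference from the paper lies entirely in stage~(i): the paper simply imports the CDF of $X=\bigl|\sum_l \mathfrak{h}_l\mathfrak{g}_l e^{j\epsilon_l}\bigr|^2$ from \cite[Eq.~(43)]{9138463} and differentiates to obtain the $K_{L-1}$ density, whereas you derive that density from scratch via the observation that the uniform phase error $\epsilon_l\sim U(-\pi,\pi)$ restores the circular-Gaussian law of $\mathfrak{g}_l e^{j\epsilon_l}$, so that $|X|^2$ is exactly an exponential--Gamma product and the Bessel form follows from the standard integral $\int_0^\infty u^{\nu-1}e^{-u-a/u}\,du=2a^{\nu/2}K_\nu(2\sqrt{a})$. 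Your route is self-contained and explains \emph{why} no CLT approximation is needed; the paper's route is shorter but opaque, delegating the key probabilistic step to a citation. Stages~(ii) and~(iii) are essentially identical to the paper's: it changes variable to $Y=d_{RD}^\delta$ and invokes \cite[Eq.~(1.16.4)]{prudnikov1990integrals}, which is precisely the Meijer-$G$ indefinite-integral identity you quote, together with the same $\int_{R_1}^{R_2}=\int_0^{R_2}-\int_0^{R_1}$ splitting.
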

\begin{proof}
Let $X = \left \vert \sum \limits_{l=1}^L \mathfrak{h}_l \mathfrak{g}_l \exp(j\epsilon_l) \right\vert^2$, using the result in \cite[Eq. (43)]{9138463}, the cumulative distribution function (CDF) of $X$ is given by
\begin{align} \label{CDF_gammaB}
F_X(x) = 1- \frac{2^{1-L}}{\Gamma(L) }\left( \sqrt{x} \right)^L \mathbf{K}_L\left(  \sqrt{x } \right).
\end{align}
Next, by taking the derivative of (\ref{CDF_gammaB}) with respect to $x$, and then using $\frac{d}{dz}[z^L\mathbf{K}_L(z)] = -z^L \mathbf{K}_{L-1}(z) $ from \cite[Eq. (8.486.14)]{gradshteyn2014table}, we obtain the PDF of $X$ with  
\begin{align} \label{PDF_B}
f_X(x) &= \frac{1}{2^L  \Gamma(L)}\left( \sqrt{x} \right)^{L-1} \mathbf{K}_{L-1}\left(  \sqrt{x} \right) \nonumber\\
& \mathop=^{(a)} \frac{1}{4 \Gamma(L) }G_{0,2}^{2,0} \left[ { \frac{x }{4}\left|  {\begin{array}{*{20}c}
    {-}   \\
   { L-1,0}  \\
\end{array}} \right.}  \right],
\end{align}
where step $(a)$ is developed using the technique \cite[Eq. (9.34.3)]{gradshteyn2014table}.
With \cite[Eq. (28)]{9622149} in mind, the PDF of $d_{RD}$ is given by
\begin{align}
f_d(r) = \frac{2r}{R_2^2 - R_1^2}, \quad R_1 \le r \le R_2,
\end{align}
Correspondingly, the PDF of $Y = d_{RD}^\delta$ is 
\begin{align} \label{PDF_PL}
f_Y(y) = \frac{2y^{\frac{2}{\delta}-1}}{\delta(R_2^2 - R_1^2)}, \quad R_1^\delta \le r \le R_2^\delta.
\end{align}

Subsequently, plugging (\ref{PDF_B}) and (\ref{PDF_PL}) into $\gamma_D =  \bar{\gamma}\frac{X}{Y}$ leads to
\begin{align}
f_D(\gamma) = \frac{1}{\bar{\gamma}} \int_{R_1^\delta}^{R_2^\delta} y f_X\left( \frac{\gamma y}{\bar{\gamma}} \right)f_Y(y)dy.
\end{align}
Then, using \cite[Eq. (1.16.4)]{prudnikov1990integrals}, we derive (3) and the proof is achieved.
\end{proof}

\section{Effective rate}
The effective rate is a dual concept connecting both physical layer and link layer. It is widely used to determine the maximal effective bandwidth that the system can support for a given delay constraint \cite{1210731}. Assuming the block fading channel for our system configuration, the normalized effective rate at \textbf{D} is analytically given by
\begin{equation}
\mathcal{R} = -\frac{1}{A} \log_2\underbrace{\left[\mathbb{E}\left((1+\gamma_M)^{-A}\right)\right]}_\mathcal{M} \text{bit/s/Hz},
\end{equation}
where $A = \frac{\theta T B}{\ln 2}$, where $\theta$, $T$, and $B$ represent the asymptotic decay rate of the buffer occupancy, the block length, and the system bandwidth, respectively. 
\begin{proposition}
Considering the existence of phase estimation error and unknown exact location, the effective rate for the given system configuration is given by
\begin{align} \label{ER_exact}
\mathcal{R} = -\frac{1}{A} \log_2\left[ \frac{\mathcal{W}}{2\delta \bar{\gamma} \Gamma(L)\Gamma(A)(R_2^2-R_1^2)} \right],
\end{align}
where
\begin{align*}
\mathcal{W} =& R_2^{2+\delta}G_{2,4}^{3,2} \left[ {\frac{R_2^\delta}{4\bar{\gamma}}   \left| {\begin{array}{*{20}c}
		{-\frac{2}{\delta},0}   \\
		{L-1,0,A-1,-1- \frac{2}{\delta} }  \\
		\end{array}} \right.} \right]  \nonumber \\
		& - R_1^{2+\delta}G_{2,4}^{3,2} \left[ {\frac{R_1^\delta}{4 \bar{\gamma}}   \left| {\begin{array}{*{20}c}
		{-\frac{2}{\delta},0}   \\
		{L-1,0,A-1,-1-\frac{2}{\delta} }  \\
		\end{array}} \right.} \right].
\end{align*}
\end{proposition}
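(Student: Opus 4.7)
The plan is to compute the moment $\mathcal{M} = \mathbb{E}[(1+\gamma_D)^{-A}]$ in closed form and then invoke the definition of the effective rate. Starting from $\mathcal{M} = \int_{0}^{\infty}(1+\gamma)^{-A} f_D(\gamma)\, d\gamma$ with $f_D(\gamma)$ taken from Proposition~1, I would split the integral into two terms according to the $R_2$ and $R_1$ contributions and keep the prefactor $\frac{1}{2\delta\bar{\gamma}\Gamma(L)(R_2^2-R_1^2)}$ untouched; the two resulting integrals are structurally identical, differing only in the argument $R_i^{\delta}/(4\bar{\gamma})$ and the multiplicative $R_i^{2+\delta}$.

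For each of these integrals I would rewrite the $(1+\gamma)^{-A}$ factor as a Meijer $G$-function via the standard identity $(1+\gamma)^{-A} = \frac{1}{\Gamma(A)}\, G_{1,1}^{1,1}\!\left[\gamma\,\middle|\,\begin{smallmatrix}1-A\\0\end{smallmatrix}\right]$ (Prudnikov~\cite{prudnikov1990integrals}), so that the integrand becomes a product of two Meijer $G$-functions with linear arguments. The integral can then be evaluated in closed form using the Mellin--Barnes convolution result for the product of two $G$-functions (e.g.\ \cite[Eq.~(2.24.1.1)]{prudnikov1990integrals}), which yields a single Meijer $G$-function whose characteristic indices are formed by concatenating the parameter lists of the two factors.

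Carrying out this bookkeeping, the $G_{1,3}^{2,1}$ from $f_D(\gamma)$ combined with the $G_{1,1}^{1,1}$ from $(1+\gamma)^{-A}$ produces a $G_{2,4}^{3,2}$: the upper row becomes $(-\tfrac{2}{\delta},0)$ (the $0$ coming from the $1-a = 1-(1-A)-\text{shift}$ contribution) and the lower row becomes $(L-1, 0, A-1, -1-\tfrac{2}{\delta})$, with $A-1$ being the entry contributed by the $(1+\gamma)^{-A}$ factor. Subtracting the $R_1$ term from the $R_2$ term and carrying the $\frac{1}{\Gamma(A)}$ into the common prefactor gives $\mathcal{M} = \frac{\mathcal{W}}{2\delta\bar{\gamma}\Gamma(L)\Gamma(A)(R_2^2-R_1^2)}$, from which \eqref{ER_exact} follows by $\mathcal{R} = -\frac{1}{A}\log_2\mathcal{M}$.

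The main obstacle is the parameter bookkeeping in Step~3: one must verify both the correct placement of the $0$ and $A-1$ entries among the $a$-list and $b$-list of the resulting $G$-function and that the convergence/contour conditions for the Mellin--Barnes formula are satisfied for the parameter ranges of interest (in particular for $A>0$, $\delta>2$, and finite $R_1,R_2$). The rest of the argument --- splitting the integral, taking the logarithm, and combining constants --- is mechanical once this step is pinned down.
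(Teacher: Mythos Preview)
Your proposal is correct and follows essentially the same route as the paper: the paper likewise writes $(1+\gamma)^{-A}=\Gamma(A)^{-1}G_{1,1}^{1,1}[\gamma\mid{}^{1-A}_{0}]$, substitutes it together with the PDF of Proposition~1 into $\mathcal{M}$, and evaluates the resulting integral of a product of two special functions via a Prudnikov identity. The only cosmetic difference is that the paper recasts the $G_{1,1}^{1,1}$ as a Fox $H$-function and cites the $H$-function Mellin transform \cite[Eq.~(2.25.1)]{prudnikov1990integrals}, whereas you stay in the Meijer $G$ setting and invoke \cite[Eq.~(2.24.1.1)]{prudnikov1990integrals}; both yield the same $G_{2,4}^{3,2}$ output.
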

\begin{proof}
The proof starts by re-expressing $\frac{1}{(1+x)^c}$ in terms of Meijer's $G$-function \cite[Chpt. 8.4]{prudnikov1990integrals}
\begin{align} \label{logarithm}
\frac{1}{(1+x)^c} & = \frac{1}{\Gamma(c)}G_{1,1}^{1,1} \left[ {x  \left| {\begin{array}{*{20}c}
		{1-c}   \\
		{0}  \\
		\end{array}} \right.} \right] \nonumber \\
		& \mathop=^{(b)} \frac{1}{\Gamma(c)}H_{1,1}^{1,1} \left[ {x  \left| {\begin{array}{*{20}c}
		{(1-c,1)}   \\
		{(0,1)}  \\
		\end{array}} \right.} \right],
\end{align}
where step $(b)$ is developed using the property of Fox's $H$-function \cite[Eq. (8.3.2.21)]{prudnikov1990integrals}. Next, we insert (\ref{logarithm}) and (\ref{PDF_MISO}) into $\mathcal{M}$, the expression $\mathcal{M}$ can be easily stated as
\begin{align} \label{Proof_Proposition2}
    \mathcal{M} =&  \frac{1}{2\delta \bar{\gamma}\Gamma(L)\Gamma(A)(R_2^2-R_1^2)} \int_0^\infty G_{1,1}^{1,1} \left[ {\gamma  \left| {\begin{array}{*{20}c}
		{1-A}   \\
		{0}  \\
		\end{array}} \right.} \right]   \nonumber \\
		&\times \left[R_2^{2+\delta}G_{1,3}^{2,1} \left[ {\frac{R_2^\delta\gamma}{4\bar{\gamma}}   \left| {\begin{array}{*{20}c}
		{-\frac{2}{\delta}}   \\
		{L-1,0,-1-\frac{2}{\delta} }  \\
		\end{array}} \right.} \hspace{-1.2ex} \right] \right. \nonumber \\
		& \left.\hspace{4ex}- R_1^{2+\delta}G_{1,3}^{2,1} \left[ {\frac{R_1^\delta\gamma}{4\bar{\gamma}}   \left| {\begin{array}{*{20}c}
		{-\frac{2}{\delta}}   \\
		{L-1,0,-1-\frac{2}{\delta} }  \\
		\end{array}} \right.} \right]\right],
\end{align}
subsequently performing the Mellin transform of the product of two Fox's $H$-function \cite[Eq.(2.25.1)]{prudnikov1990integrals}, the proof is eventually concluded.  
\end{proof}
\begin{remark}
Considering the phase estimation error alone, the effective rate for the given system is given by
\begin{align} \label{ER_ErrorOnly}
    \mathcal{R}= -\frac{1}{A}\log_2\left[\frac{1}{4\bar{\gamma}\Gamma(A)\Gamma(L)}G_{1,3}^{3,1} \left[ {\frac{1}{4\bar{\gamma}} \left| {\begin{array}{*{20}c}
		{0}   \\
		{L-1.0,A-1}  \\
		\end{array}} \right.} \hspace{-1.5ex} \right] \right].
\end{align}
\end{remark}
\begin{proof}
Fixing both $d_{SR}$ and $d_{RD}$, the derivation of (\ref{ER_ErrorOnly}) is based on the PDF of $\gamma_D = \bar{\gamma} X$, which is obtained as follows
\begin{align} \label{PDF_WithError}
    f_D(\gamma) = \frac{1}{\bar{\gamma}}f_X\left(\frac{\gamma}{\bar{\gamma}} \right).
\end{align}
Next plugging (\ref{PDF_WithError}) and (\ref{logarithm}) into $\mathcal{M}$, and following the similar line as in (\ref{Proof_Proposition2}), the proof is finished.
\end{proof}
\begin{remark}
When $\rho \rightarrow \infty$, i.e., at high SNR regime, the asymptotic behavior of ER given in (\ref{ER_exact}) is further simplified as 
\begin{align}
\label{eq:asymptotic}
    \mathcal{R}^\infty \approx -\frac{1}{A} \log_2 \left( \frac{R_2^{2 + \delta} -R_1^{2+\delta}}{ \bar{\gamma} \left(2\delta + 4 \right) (L-1)(A-1)(R_2^2 -R_1^2)}\right).
\end{align}
\end{remark}
\begin{proof}
When $\rho$ is at high SNR regime, both $\frac{R_2^{\delta}}{4\bar{\gamma}}$ and $\frac{R_1^{\delta}}{4\bar{\gamma}}$ tend to $0$. 
Thus, the expression in \eqref{eq:asymptotic} is obtained by expanding the Meijer's $G$-function at $0$ and using the fact given in \cite{7370883}.
\end{proof}
\section{Numerical Analysis} 
In this section, we provide numerical evaluations of the theoretical ER and compare it with the results obtained using Monte Carlo simulations with $10^6$ runs.

We start with the analysis of effective rate for different values of $\rho$ and $L$. For this, we assume $A = 5.4$, $\delta =3.4$, $R_1=2$ m and $R_2 = 5$ m. The results are depicted in Fig. \ref{Fig2}. We observe that the analytically determined effective rate is close to the results obtained via extensive simulations.  
\begin{figure}[!t]
\centering{\includegraphics[width=\columnwidth]{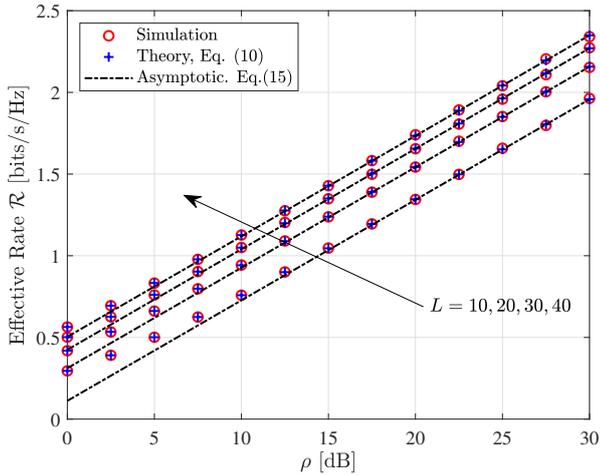}}
\caption{Effective rate $\mathcal{R}$ vs $\rho$ for selected $L$.}
\label{Fig2}
\end{figure}
Furthermore, we observe that the effective rate increases with increasing number of RIS elements. This is expected, since the spatial diversity increases with increasing $L$, which can be used to improve the performance. In addition, we observe that the results of the asymptotic approximation deviate substantially from the simulation and theory at low SNR regime. Specifically, the approximation underestimates the effective rate, especially with a low number of RIS elements. However, with practical value of $L$, this effect vanishes, such that the low complexity expression in \eqref{eq:asymptotic} becomes accurate even at low SNR.

\begin{figure}[!t] 
\hfill
\subfloat[$d_{RD} = 2$m]{\includegraphics[width=\columnwidth]{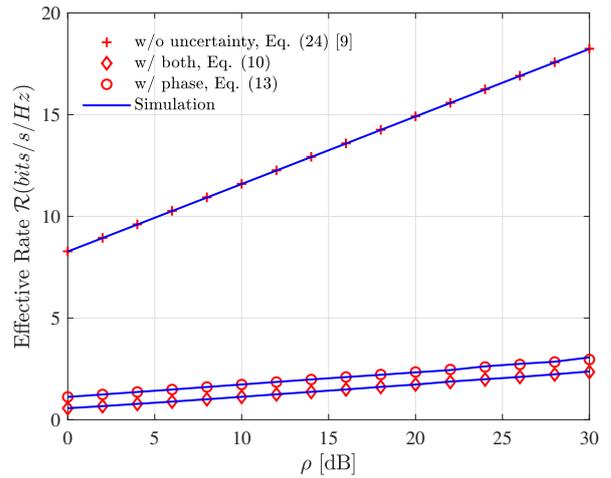}}
\hfill
\subfloat[$d_{RD} = 5$m]{\includegraphics[width=\columnwidth]{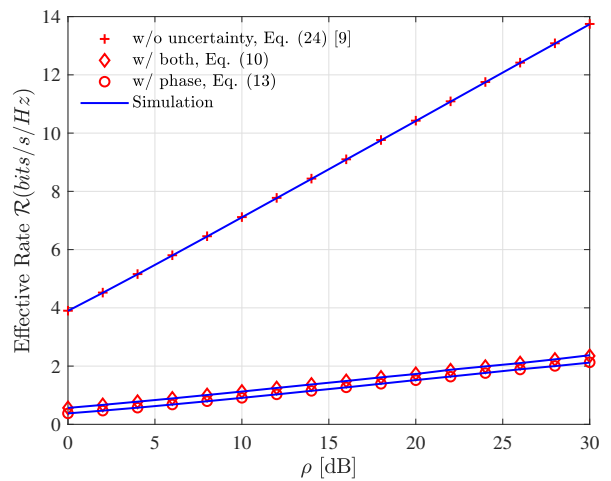}}
\hfill
\caption{Effective rate $\mathcal{R}$ vs. $\rho$ (a) $d_{RD} =2$; and (b) $d_{RD} =5$.}
\label{Fig6}
\end{figure}

In Fig. \ref{Fig6}, the results are depicted for three scenarios assuming $L=40$: (i) without impacts from location and phase error uncertainty; (ii) with phase error only; and (iii) with both uncertainties. We observe a substantial performance degradation in terms of effective rate in presence of the phase and location uncertainties. Also, the location uncertainty affects the effective rate performance much less than the phase uncertainty. Interestingly, the result obtained in \cite{9312154} without taking into account these practical impairments leads to a dramatic overestimation of the expected system performance, which may even render the whole network design based on such an idealistic effective rate prediction incorrect. 

Fig. \ref{Fig3} investigates the impact of the coverage area on the effective rate. Here, $\rho=5$ dB has been assumed. One can observe that the effective rate decreases rapidly with increasing coverage area, since the average distance between the user and RIS increases, thus increasing the path loss. In practice, Fig. \ref{Fig3} being a theoretical proof, can be utilized to deduce e.g. the required number of RIS elements in order to satisfy a given network demand within a certain coverage area. Consequently, no complicated numerical optimization is required for this design problem anymore.

\begin{figure}[!t] 
\centering{\includegraphics[width=\columnwidth]{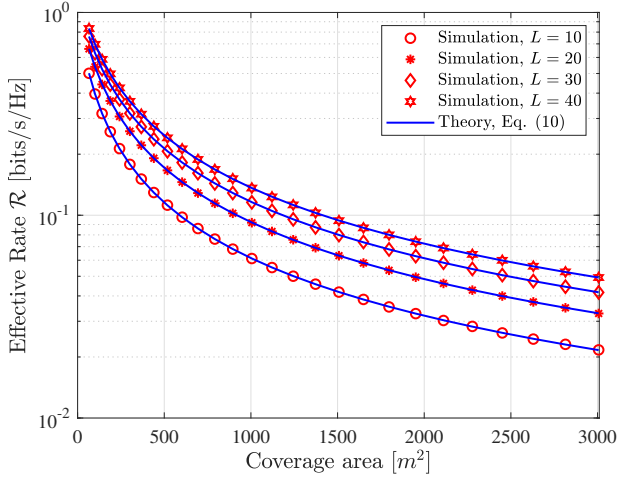}}
\caption{Effective rate $\mathcal{R}$ vs. coverage area $\pi(R_2^2 - R_1^2)$ for selected $L$.}
\label{Fig3}
\end{figure}

Considering four different kinds of mobile radio environments \cite[Chapter 3.9]{rappaport1996wireless}, Fig. \ref{Fig4} plots $\mathcal{R}$ against the decay rate $\theta$ for selected values of path loss exponent $\delta$. For this, we fix $L=20$, $R_2=8$ m, and $A = \frac{\theta TB}{\ln 2}$ with $TB=1$ while all other parameters remain unchanged. Obviously, the effective rate demonstrates a performance degradation as $\theta$ increases. In other words, a smaller delay constraint results in a higher effective rate performance. Correspondingly, a better quality of service can be supported for delay-tolerant services. Starting from $\theta\approx 10^{0}$ 1/bit, the effective rate starts to drop to very low values, i.e., the delay-sensitive communications services, such as video conferencing or gaming, cannot be supported. In addition, we observe that for a large value of $\delta$, i.e., in case of poor propagation conditions, the performance drops significantly compared to low path loss exponents.

\begin{figure} 
\centering{\includegraphics[width=\columnwidth]{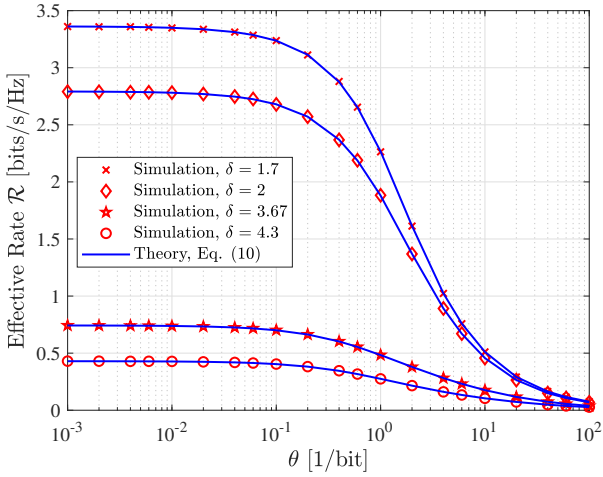}}
\caption{Effective rate $\mathcal{R}$ vs. $\theta$ for selected $\delta$.}
\label{Fig4}
\end{figure}


\section{Conclusion}
In this paper, we derived two novel closed-form expressions of effective rate for RIS-assisted wireless networks under the impacts of location uncertainty and phase errors. Our derived closed-form results are validated via Monte-Carlo simulations and their accuracy is displayed to be high. We further observed a performance degradation in case of increasing decay rate and compared the system performance in different environments. Using the derived expression, it is possible to facilitate the network designers with respect to the coverage area, placement of RIS, etc. under realistic conditions of phase errors and location uncertainty.


\section*{Acknowledgment}
This work has been supported by the Luxembourg National Research Fund (FNR) projects, entitled Exploiting Interference for Physical Layer Security in 5G Networks (CIPHY), and Reconfigurable Intelligent Surfaces for Smart Cities (RISOTTI). This work was done while Long Kong was a research associate at the University of Luxembourg. He is now a postdoctoral fellow at the University of Ottawa.
\ifCLASSOPTIONcaptionsoff
  \newpage
\fi

\balance

\bibliographystyle{IEEEtran}
\bibliography{EC_RIS_2021}

\end{document}